\documentclass[conference]{IEEEtran}
\IEEEoverridecommandlockouts

\title{Dynamic-Aware LC-DMA Beam Tracking for \newline LEO Satellite Communications}
\author{
\IEEEauthorblockN{Mohamadreza Delbari, Mohammad Zarei, Denis Giniatoulline, and Vahid Jamali}
\IEEEauthorblockA{
Technical University of Darmstadt, Darmstadt, Germany 
\thanks{This work was supported in part by the DFG, German Research Foundation, within the Collaborative Research Center MAKI (SFB 1053) under Project-ID 210487104; in part by the LOEWE initiative (Hesse, Germany) within the emergenCITY Centre under Grant LOEWE/1/12/519/03/05.001(0016)/72, and in part by the German Federal Ministry for Research, Technology and Space (BMFTR) under the program of ``Souverän. Digital. Vernetzt.'' joint project Open6GHub plus (Project-ID 16KIS2407).}
}
}

\usepackage{amsfonts,amsmath,amsthm,bm}
\usepackage{cite}
\usepackage{amsmath,amssymb,amsfonts}
\usepackage{algorithmic}
\usepackage{graphicx}
\usepackage{textcomp}
\usepackage{xcolor}
\usepackage[T1]{fontenc}
\usepackage{comment}
\usepackage{eucal}
\usepackage{algorithm}
\usepackage{algorithmic}
\usepackage[acronyms,nonumberlist,nopostdot,nomain,nogroupskip]{glossaries}
\usepackage{comment}
\usepackage[font=footnotesize]{caption}
\usepackage{subcaption}
\usepackage{balance}
\usepackage{flafter}

\def\bGamma{\boldsymbol{\Gamma}}
\def\btheta{\boldsymbol{\theta}}
\def\bomega{\boldsymbol{\omega}}
\def\bvarphi{\boldsymbol{\varphi}}
\def\bPsi{\boldsymbol{\Psi}}
\def\balpha{\boldsymbol{\alpha}}
\def\bbeta{\boldsymbol{\beta}}
\def\bzeta{\boldsymbol{\zeta}}
\def\bmu{\boldsymbol{\mu}}
\def\bSigma{\boldsymbol{\Sigma}}
\def\blambda{\boldsymbol{\lambda}}
\def\bzero{\boldsymbol{0}}
\def\bone{\boldsymbol{1}}

\def\ba{\mathbf{a}}
\def\bI{\mathbf{I}}
\def\bH{\mathbf{H}}
\def\bx{\mathbf{x}}
\def\bX{\mathbf{X}}

\def\by{\mathbf{y}}
\def\bn{\mathbf{n}}
\def\bu{\mathbf{u}}
\def\bq{\mathbf{q}}
\def\bt{\mathbf{t}}
\def\bw{\mathbf{w}}
\def\bg{\mathbf{g}}
\def\bG{\mathbf{G}}
\def\bm{\mathbf{m}}
\def\bQ{\mathbf{Q}}
\def\br{\mathbf{r}}
\def\bL{\mathbf{L}}
\def\be{\mathbf{e}}
\def\bh{\mathbf{h}}

\def\tx{\mathrm{tx}}
\def\rx{\mathrm{rx}}
\def\tmax{\mathrm{max}}
\def\SNR{\mathrm{SNR}}
\def\thr{\mathrm{thr}}
\def\diag{\mathrm{diag}}
\def\Cset{\mathbb{C}}
\def\Rset{\mathbb{R}}
\def\Ex{\mathbb{E}}
\def\sCN{\mathcal{CN}}

\def\Trans{\mathsf{T}}
\def\Herm{\mathsf{H}}
\newcommand{\e}{\mathsf{e}}
\newcommand{\jj}{\mathsf{j}}

\newcommand{\defeq}{\triangleq}

\newtheorem{lem}{Lemma}

\newacronym{LEO}{LEO}{low Earth orbit}
\newacronym{LC}{LC}{liquid crystal}
\newacronym{LoS}{LoS}{line-of-sight}
\newacronym{Rx}{Rx}{receiver}
\newacronym{Tx}{Tx}{transmitter}
\newacronym{AWGN}{AWGN}{additive white Gaussian noise}
\newacronym{DMA}{DMA}{dynamic metasurface antenna}
\newacronym{RF}{RF}{radio frequency}
\newacronym{UPA}{UPA}{uniform planar array}
\newacronym{AoD}{AoD}{angle of departure}
\newacronym{AoA}{AoA}{angle of arrival}
\newacronym{QoS}{QoS}{quality of service}
\newacronym{SNR}{SNR}{signal-to-noise ratio}
\newacronym{AO}{AO}{alternative optimization}
\newacronym{MRT}{MRT}{maximum ratio transmission}
\newacronym{MRC}{MRC}{maximum ratio combining}
\newacronym{NTN}{NTN}{non-terrestrial network}
\newacronym{DC}{DC}{direct current}
\begin{document}
\maketitle

\begin{abstract}
\Gls{LC}-based \glspl{DMA} offer a power-efficient, low-cost, and scalable solution for \gls{LEO} satellite communications where large apertures are needed to compensate for large path loss in the Earth-satellite link. 
However, beam tracking with \gls{LC}-\glspl{DMA} is governed by two coupled dynamics: the required time-varying beam direction caused by user and satellite mobility, and the finite response time of the \gls{LC} phase shifters. This raises the question of whether \gls{LC}-\glspl{DMA} can reconfigure sufficiently fast to support high-mobility \gls{LEO} satellite links. To address this problem, we develop a dynamic-aware beam-tracking framework that explicitly accounts for both dynamics. In particular, we formulate an optimization problem to maximize the cumulative received \gls{SNR} over a predictive tracking horizon subject to the physical dynamics of the \gls{LC} unit cells as well as transmit power constraints. To solve this non-convex problem efficiently, we develop an alternating optimization framework combining a closed-form \gls{MRT} digital precoder and \gls{MRC} with a parallel, closed-form dynamic-aware phase-shift update. Simulation results show that when the time response of \gls{LC}-\glspl{DMA} is properly accounted for, it can support practical mobility scenarios. Moreover, the proposed dynamic-aware algorithm maintains stable \gls{SNR} and outperforms conventional designs, particularly at high vehicle velocities and low orbital azimuth angles.
\end{abstract}

\glsresetall
\section{Introduction}
\Glspl{NTN} utilizing \gls{LEO} satellites have emerged as a key component for 6G wireless connectivity and providing continuous global coverage to remote areas and mobile devices \cite{su2019broadband,kodheli2020satellite}. Nevertheless, establishing reliable uplink communication between \gls{LEO} satellites and high-speed vehicles presents several challenges. In particular, high beamforming gains are required to compensate for the large path loss, and adaptive beamforming with continuous beam tracking is needed to accommodate rapid \gls{LEO} satellite motion and vehicle mobility \cite{you2020massive}.

Conventional phased arrays rely on numerous high-resolution phase shifters and dedicated power amplifiers. This leads to prohibitive power consumption and hardware cost when scaled to large aperture sizes \cite{Shlezinger2019}. To overcome these limitations, \glspl{DMA} are considered as an energy-efficient alternative for holographic beamforming \cite{smith2017}. A \gls{DMA} integrates dense sub-wavelength metamaterial radiating elements directly onto microstrip feed lines driven by a reduced number of \gls{RF} chains compared to the conventional phased arrays. By dynamically tuning the Lorentzian resonance of individual unit cells, \glspl{DMA} achieve flexible spatial multiplexing and beam focusing directly in the electromagnetic domain and at the same time substantially reduce the RF hardware overhead \cite{smith2017,Kimaryo2023}. 

Among various reconfigurable technologies (e.g., PIN diodes and varactors), \gls{LC} technology provides continuous phase tunability, high operating frequencies (into millimeter-wave and terahertz regimes), and near-zero static \gls{DC} power consumption \cite{Yoo2025,Delbari2026,Xu2024}. In an LC-based DMA, applying a low-frequency bias voltage rotates the anisotropic nematic \gls{LC} molecules, which alters the effective local dielectric permittivity and shifts the resonance frequency of the metamaterial element \cite{smith2017,Park2024}. These properties make \gls{LC}-\glspl{DMA} an attractive choice for large-aperture, energy-efficient satellite terminals.

Nevertheless, beam tracking with \gls{LC}-\glspl{DMA} is governed by two coupled dynamics: the mobility-induced dynamics of the vehicle-to-satellite link and the internal dynamics of the \gls{LC} phase shifters. The former continuously changes the required beam direction due to vehicle and satellite mobility, while the latter limits how fast the beam can be reconfigured. In particular, the molecular reorientation speed of \gls{LC} mixtures is fundamentally governed by rotational viscosity and elastic restoration forces, which results in switching response times on the order of milliseconds \cite{Delbari2024}. This mismatch between the required and achievable beam dynamics may become a performance bottleneck in \gls{LEO} satellite systems. Conventional beam-tracking algorithms assume instantaneous phase transitions and aim to maximize the instantaneous \gls{SNR}. These algorithms may demand large phase changes across consecutive tracking steps that cannot be realized instantaneously with \gls{LC}-\glspl{DMA}, and hence result in transition periods during which the received \gls{SNR} can severely degrade. This interplay between mobility and hardware dynamics raises the question of whether \gls{LC}-\glspl{DMA} can reconfigure sufficiently fast to support high-mobility \gls{LEO} satellite links.

To address this challenge, we develop a dynamic-aware beam-tracking framework for \gls{LC}-\glspl{DMA} in vehicle-to-\gls{LEO} satellite uplink communications that explicitly accounts for both mobility and \gls{LC} hardware dynamics. The main contributions of this work are summarized as follows:
\begin{itemize}
    \item We establish an analytical framework for \gls{LC}-based \gls{DMA} \glspl{Tx} in satellite uplink channels that jointly models guided-wave propagation and the physical rise/decay transition dynamics of the \gls{LC} phase shifters in Section~\ref{sec: System, DMA and Dynamic LC Models}.
    \item We formulate a joint digital beamforming and dynamic-aware \gls{LC} phase-reconfiguration problem to maximize the cumulative received \gls{SNR} over a predictive tracking horizon, subject to the physical dynamics and phase limits of the \gls{LC} unit cells and the transmit power constraint in Section~\ref{sec: Problem Formulation and Solution}.
    \item We propose an \gls{AO} algorithm that alternates between a parallelized, closed-form dynamic-aware phase-shift optimization with linear complexity in the number of \gls{DMA} radiating elements in Section~\ref{sec: DMA phase shift optimization}, and a closed-form digital precoder and combiner in Section~\ref{sec: Tx Digital Precoder and Rx Combiner Optimization}.
    \item In Section~\ref{sec: Performance Comparison}, numerical simulations show that, when the time response of \gls{LC}-\glspl{DMA} is properly accounted for, they can support practical mobility scenarios. Moreover, the proposed dynamic-aware design maintains stable \gls{SNR} and outperforms conventional designs, e.g., \cite{Kimaryo2023}, particularly at high vehicle velocities and low orbital azimuth angles. 
\end{itemize}

\textit{Notation:} Vectors and matrices are denoted by boldface lowercase ($\bx$) and boldface uppercase letters ($\bX$), respectively, while scalars are represented by standard italic letters ($x$). The sets of real and complex numbers are denoted by $\Rset$ and $\Cset$, respectively, with $\Rset^N$ and $\Cset^{M \times N}$ representing the spaces of $N$-dimensional real vectors and $M \times N$ complex matrices. The transpose, Hermitian, and Kronecker-product operators are denoted by $(\cdot)^\Trans$, $(\cdot)^\Herm$, and $\otimes$, respectively. The Euclidean norm of vector $\bx$ is denoted by $\Vert{}\bx\Vert{}$, $\vert{}x\vert{}$, and $x^*$ denote the absolute and conjugate values of scalar $x$, respectively. 
The operator $\diag(\bx)$ forms a diagonal matrix with the elements of vector $\bx$ along its main diagonal, and $[\bx]_n$ denotes the $n$-th element of vector $\bx$. The $N \times N$ identity matrix, the $N \times 1$ all-zeros vector, and the $N \times 1$ all-ones vector are denoted by $\bI_N$, $\bzero_N$, and $\bone_N$, respectively. The statistical expectation is represented by $\Ex\{\cdot\}$, and $\bx \sim \sCN(\bmu, \bSigma)$ denotes a circularly symmetric complex Gaussian random vector with mean $\bmu$ and covariance matrix $\bSigma$. Finally, $\ln(\cdot)$ and $\mathcal{O}(\cdot)$ denote the natural logarithm and big-O notation, respectively.

\begin{figure*}[!t]
    \centering
    \includegraphics[width=1\textwidth]{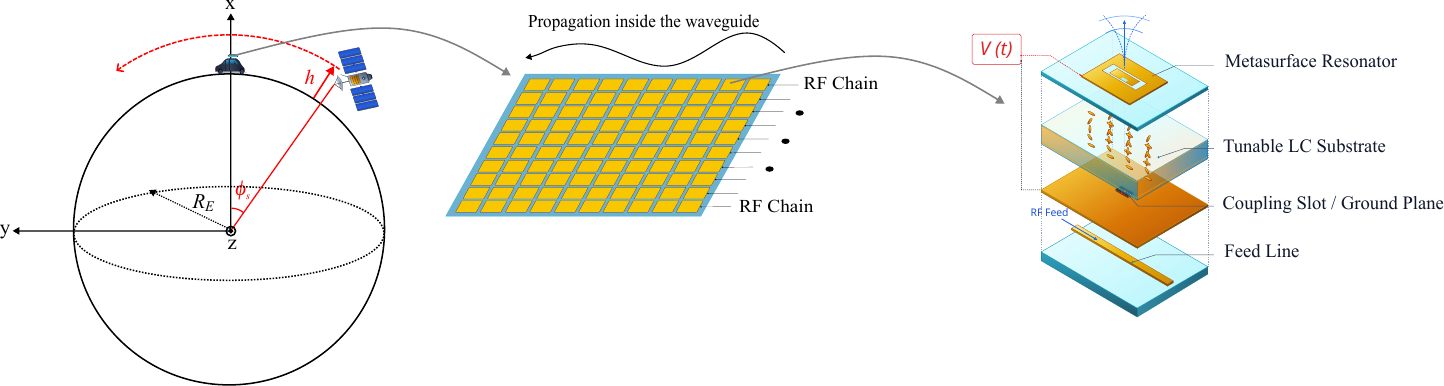}
    \caption{System model for satellite-to-vehicle communication using an \gls{LC}-based \gls{DMA} \gls{Tx}: (left) orbital geometry and communication link, (middle) waveguide-fed \gls{DMA} aperture connected to \gls{RF} chains, and (right) large view of a voltage-tunable ($V(t)$) \gls{LC} metasurface unit cell.}
    \label{fig: system model}
\end{figure*}

\section{System, DMA and Dynamic LC Models}
\label{sec: System, DMA and Dynamic LC Models}
In this section, we first present the considered system model, followed by the channel model. Finally, we discuss \gls{LC}-based \gls{DMA} time response model.
\subsection{System model}
We consider a narrowband uplink communication scenario where a moving vehicle equipped with an \gls{LC}-based \gls{DMA} transmits data to a \gls{LEO} satellite. The satellite is equipped with an $N_\rx$-element antenna array. The \gls{DMA} \gls{Tx} consists of $L$ microstrip/waveguide feed lines (each connected to a dedicated \gls{RF} chain) driving a total of $N_\tx = \sum_{\ell=1}^L N_\ell$ metamaterial radiating elements, where $N_\ell$ denotes the number of elements coupled to the $\ell$-th feed line. The received baseband signal $\by(t) \in \Cset^{N_\rx}$ at the satellite is modeled as
\begin{equation}
    \label{eq: system model}
    \by(t)=\bH(t) \bGamma(t)\bx(t)+\bn(t),
\end{equation}
where $\bH(t) \in \Cset^{N_\rx \times N_\tx}$ represents the wireless channel matrix from the $N_\tx$ metasurface elements to the satellite array, $\bx(t) \in \Cset^{L}$ is the transmit signal vector, and $\bn(t) \sim \sCN(\bzero_{N_\rx}, \sigma_n^2 \bI_{N_\rx}),\,\forall t$, denotes the \gls{AWGN} at the satellite \gls{Rx} with variance $\sigma_n^2$. The transmit signal vector is expressed as $\bx(t) = \bq s(t)$, where $\bq \in \Cset^{L}$ is the baseband digital beamformer satisfying $\|\bq\|^2 \leq P_t$ (with $P_t$ being the maximum transmit power), and $s(t) \in \Cset$ is the transmitted symbol with normalized power $\Ex\{|s(t)|^2\} = 1$. The spatial-reconfiguration and feed-waveguide response of the \gls{DMA} is represented by the block diagonal matrix $\bGamma(t) \in \Cset^{N_\tx \times L}$:
\begin{equation}
    \label{eq: phase shift channel}
    \bGamma(t)=\left[\begin{matrix}
        \btheta_{1}(t) & \bzero_{N_{1}} & \cdots & \bzero_{N_{1}}\\
        \bzero_{N_{2}} & \btheta_{2}(t) & \cdots & \bzero_{N_{2}}\\
        \vdots & \vdots & \ddots & \vdots \\
        \bzero_{N_{L}} & \bzero_{N_{L}} & \cdots & \btheta_{L}(t)
    \end{matrix}\right],
\end{equation}
where $\btheta_\ell(t) = [[\btheta_\ell]_1(t), \dots, [\btheta_\ell]_{N_\ell}(t)]^\Trans \in \Cset^{N_\ell}$ contains the effective radiating weights along feed line $\ell \in \{1, \dots, L\}$. The response of the $n_\ell$-th element on feed $\ell$ where $n_\ell\in\{1,\cdots,N_\ell\}$ is given by
\begin{equation}
    [\btheta_{\ell}]_{n_\ell}(t)=[\bm_\ell]_{n_\ell}(t)[\bzeta_\ell]_{n_\ell},
\end{equation}
where $[\bm_\ell]_{n_\ell}(t)$ is the tunable response induced by the \gls{LC} state, and $[\bzeta_\ell]_{n_\ell}$ is the waveguide response, which is given by
\begin{equation}
    \label{eq: zeta}
    [\bzeta_\ell]_{n_\ell}=\e^{-([\balpha_\ell]_{n_\ell}+\jj[\bbeta_\ell]_{n_\ell})\|\bu_{\ell,n_\ell}-\bu_{\ell,0}\|}, \forall \ell,n_\ell,
\end{equation}
where $[\balpha_\ell]_{n_\ell}$ is the waveguide attenuation coefficient (accounting for dielectric/ohmic losses and the average energy radiation from previous elements\footnote{This constant-decay approximation remains accurate due to weak element-to-waveguide coupling \cite{smith2017}.}), $[\bbeta_\ell]_{n_\ell}$ is the guided-wave propagation constant, $\bu_{\ell,n_\ell} \in \Rset^3$ is the position of the $n_\ell$-th radiator, and $\bu_{\ell,0} \in \Rset^3$ represents the feed origin of line $\ell$.

\subsection{Mobility-Induced Channel Dynamics}
\label{sec: Channel model}
Given the typical operating environment of satellite communications, the propagation channel is dominated by the far-field \gls{LoS} path. Assuming \glspl{UPA} at both the vehicle \gls{Tx} and the satellite \gls{Rx}, the channel matrix is modeled as
\begin{equation}
\label{eq: channel model}
\bH(t)=h_0\ba_\rx(\bPsi_\rx(t))\ba_\tx^\Herm(\bPsi_\tx(t)),
\end{equation}
where $h_0 \in \Cset$ denotes the complex channel attenuation factor of the \gls{LoS} link (capturing free-space path loss), while $\ba_\tx(\bPsi_\tx) \in \Cset^{N_\tx}$ and $\ba_\rx(\bPsi_\rx) \in \Cset^{N_\rx}$ represent the array response (steering) vectors of the \gls{Tx} and \gls{Rx}, respectively. The term $\bPsi_\tx = (\theta_\tx, \phi_\tx)$ specifies the \gls{AoD} toward the satellite relative to the \gls{Tx} array coordinate system, with $\theta_\tx$ and $\phi_\tx$ denoting the elevation and azimuth angles, respectively. Correspondingly, $\bPsi_\rx = (\theta_\rx, \phi_\rx)$ denotes the \gls{AoA} from the vehicle defined within the satellite \gls{Rx} array local coordinate system. Due to the motion of both the vehicle and the \gls{LEO} satellite, $\bPsi_\tx(t)$ and $\bPsi_\rx(t)$ vary over time, which makes  channel matrix $\bH(t)$ time-varying and necessitates adaptive beamforming and beam tracking.


\subsection{LC Hardware Dynamics}
\label{sec: LC time model}
In contrast to the mobility-induced channel dynamics, the \gls{LC}-\gls{DMA} exhibits internal hardware dynamics that determine how quickly its radiating elements can be reconfigured. The response of the $n_\ell$-th metamaterial element coupled to the $\ell$-th feed line of a \gls{DMA}, denoted by $[\bm_\ell]_{n_\ell}(V,t)$, is governed by a Lorentzian resonance profile \cite{smith2017,Park2024}:
\begin{equation}
    \label{eq: DMA phase shift}
    [\bm_\ell]_{n_\ell}(V,t)=\frac{\Omega f}{\Gamma}\left(\frac{f}{\Delta+\jj f}\right),
\end{equation}
where $\Delta\defeq\frac{f_0(V,t)^2-f^2}{\Gamma}$. Here, $\Omega \in \Rset$ is the oscillator strength, $f_0$ is the time-varying resonance frequency tuned by the applied bias voltage $V$, and $\Gamma$ denotes the damping factor. For \gls{LC} elements, the temporal evolution of $f_0(V,t)$ under an applied voltage $V$ can be experimentally characterized \cite{Yoo2025,Xu2024}. Moreover, in a \gls{DMA}, the amplitude and phase are inherently coupled. Let $[\bomega_\ell]_{n_\ell}(t)$ denote the phase of $[\bm_\ell]_{n_\ell}(t)$. Under resonance-tuning, narrow-band, and passivity ($\frac{\Omega f}{\Gamma}=1$) conditions, the element response at the operating carrier frequency can be parametrized in terms of its phase response $[\bomega_\ell]_{n_\ell}(t)$ as \cite{smith2017}
\begin{align}
\label{eq: m and omega}
    [\bm_\ell]_{n_\ell}(t)=\sin([\bomega_\ell]_{n_\ell}(t))\e^{\jj[\bomega_\ell]_{n_\ell}(t)},
\end{align}
where $[\bomega_\ell]_{n_\ell}(t) \in [0, \pi]$ represents the controllable phase shift of the $(\ell,n_\ell)$-th element at time $t$.

Due to the finite reorientation speed of the \gls{LC} molecules under applied electric fields, the phase shift cannot switch instantaneously. The transition model for the \gls{LC} phase shifts for the $(\ell,n_\ell)$-th element to transition from an initial phase $[\bomega^0_\ell]_{n_\ell}$ (the known current state of the elements at the current time slot) to a desired phase $[\bomega^*_\ell]_{n_\ell}$ is given respectively for $[\bomega^*_\ell]_{n_\ell}\geq[\bomega^0_\ell]_{n_\ell}$ and $[\bomega^*_\ell]_{n_\ell}\leq[\bomega^0_\ell]_{n_\ell}$ by \cite{Delbari2026}
\begin{subequations}
\label{eq: time response}
\begin{align}
    &[\bomega_\ell]_{n_\ell}(t)\!=\![\bomega^0_\ell]_{n_\ell}\!-\!\left([\bomega^0_\ell]_{n_\ell}-\omega_\tmax\right)\!\!\big(1\!-\!\e^{\!-\frac{t}{\tau^+}}\big),\\
    &[\bomega_\ell]_{n_\ell}(t)\!=\![\bomega^0_\ell]_{n_\ell}\!-[\bomega^0_\ell]_{n_\ell}\big(1\!-\!\e^{\!-\frac{t}{\tau^-}}\big),
\end{align}
\end{subequations}
where $\tau^+$ and $\tau^-$ denote the rise (electric field-driven) and decay (relaxation-driven) reorientation time constants of the \gls{LC} material, respectively, which depend on the \gls{LC} physical parameters; see Fig.~\ref{fig: exp-tau}. Furthermore, $\omega_{\max} = \pi$ denotes the maximum achievable phase shift range. Note that for both equations in \eqref{eq: time response}, $[\bomega_\ell]_{n_\ell}(t)=[\bomega^*_\ell]_{n_\ell},\forall t,$ as soon as it reaches the desired phase shift since we used over- and undershooting technique \cite{Delbari2026}.

\begin{figure}[t]
    \centering
    \includegraphics[width=0.48\textwidth]{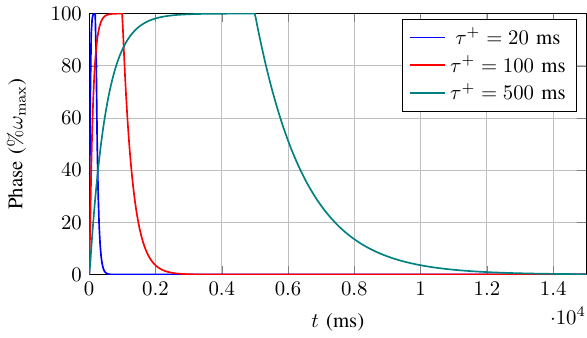}
    \caption{Transient response (rise and decay dynamics) of the \gls{LC} cell for various relaxation time constants $\tau$ resulting from differing \gls{LC} parameters, where $\tau^-=3\tau^+$ in all curves \cite{Xu2024,Delbari2026}.}
    \label{fig: exp-tau}
\end{figure}

\section{Problem Formulation and Solution}
\label{sec: Problem Formulation and Solution}
We assume that the vehicle trajectory over the prediction horizon can be predicted using existing algorithms such as \cite{Yuan2021}, while the satellite trajectory is known. Accordingly, the mobility-induced channel dynamics $\bH(t)$ in \eqref{eq: channel model} are determined by the predicted trajectories. Moreover, the \gls{LC} hardware dynamics govern the achievable \gls{DMA} configuration $\bGamma(t)$ according to \eqref{eq: time response}. As a \gls{QoS} metric, we consider the receive \gls{SNR} at the satellite, which captures both dynamics and is defined as
\begin{equation}
    \label{eq: SNR}
    \SNR(t)=\frac{|\bw^\Herm\bH(t) \bGamma(t)\bq|^2}{\sigma_n^2},
\end{equation}
where $\bw\in\Cset^{N_\rx}$ is the receive combining vector. Assuming $N_e\defeq N_1=N_2=\cdots = N_L$, we can drop the index $\ell$ from $n_\ell$ and denote it as $n=(\ell-1)N_e+n_e,\,\forall \ell=1,\ldots,L$ and $n_e=1,\ldots,N_e$. We then formulate the dynamic-aware beamforming problem over $S$ sampling instants as:
\begin{subequations}
\label{eq: optimization 1}
\begin{align}
    \quad&~\underset{\bomega,\bq,\bw}{\max}~\sum_{s=1}^{S}\SNR(t_s)
    \label{eq: optimization 1 a}\\&~\text {s.t.} ~~ 0\leq [\bomega]_{n} < \omega_\tmax, \forall n,
    \label{eq: optimization 1 b}\\&\quad\hphantom {\text {s.t.} } \|\bq\|_2^2\leq P_t, \quad \|\bw\|\leq 1,
    \label{eq: optimization 1 c}
\end{align}
\end{subequations}
where $\bomega\defeq[\bomega_1^\Trans,\cdots,\bomega_L^\Trans]^\Trans\in\Rset^{N_\tx\times 1}$ and $t_s$ is the sampling time. 
Here, \eqref{eq: optimization 1 b} is the realizable
phase-shift range with $\omega_\tmax=\pi$ being the maximum unit cell phase-shift, and \eqref{eq: optimization 1 c} enforces the maximum transmit power budget and combiner constraint. The problem \eqref{eq: optimization 1} is a non-convex problem due to the non-convexity of the objective function. Therefore, obtaining a global solution is challenging, and we optimize $\{\bw,\bq\}$ and $\bomega$ using an \gls{AO} framework.

\subsection{LC-DMA Phase-Shift Optimization}
\label{sec: DMA phase shift optimization}
Assuming a given fixed $\bq$ and $\bw$, the time reconfiguration of the \gls{DMA} phase configuration problem is:
\begin{subequations}
\label{eq: optimization 2}
\begin{align}
    \quad&~\underset{\bomega}{\max}~\sum_{s=1}^{S}|\bg^\Herm(t_s)\bm(t_s)|^2
    \label{eq: optimization 2 a}\\&~\text {s.t.} ~~ 0\leq [\bomega]_{n} < \omega_\tmax, \forall n,
    \label{eq: optimization 2 b}
\end{align}
\end{subequations}
where $\bm(t)\defeq[\bm_1^\Trans(t),\cdots,\bm_L^\Trans(t)]^\Trans\in\Cset^{N_\tx}$ with $[\bm_\ell]_{n_\ell}(t),\forall \ell,n_\ell$ defined in \eqref{eq: m and omega}, and 
\begin{equation}
\label{eq: g equation}
\bg^\Herm(t)\defeq\left(\bw^\Herm\bH(t)\diag(\bzeta)\bQ\right)/\sigma_n,
\end{equation}
with $\bzeta\defeq[\bzeta_1^\Trans,\cdots,\bzeta_L^\Trans]^\Trans\in\Cset^{N_\tx}$ with $[\bzeta_\ell]_{n_e},\,\forall \ell,n_e$ defined in \eqref{eq: zeta} and $\bQ=\diag\left(\bq\otimes\bone_{N_e}\right)\in\Cset^{N_\tx\times N_\tx}$.

Problem \eqref{eq: optimization 2} is non-convex due to the non-linear coupling of $\bomega$ and time in \eqref{eq: optimization 2 a}. To solve this problem efficiently in real time with linear complexity, we first decouple the objective over the individual metamaterial elements across time snapshots in the following Lemma.
\begin{lem}
\label{lem}
    For sufficiently large $N_\tx$ and under the phase alignment condition $\arg\left(\bg^\Herm(t_s)\bm(t_s)\right)\approx 0$, the spatial SNR metric at snapshot $s$ can be decoupled across individual elements $n$ as, $|\bg^\Herm(t_s)\bm(t_s)|^2$
    \begin{equation}
         \approx \sum_{n=1}^{N_\tx} 2 r_{n}(t_s) \sin([\bomega]_n(t_s))\cos([\bomega]_n(t_s) - [\bvarphi]_n(t_s)) + c(t_s),
    \end{equation}
    where $[\bvarphi]_n(t_s) = \arg([\bg]_n(t_s))$, $r_n(t_s) = \big|\sum_{m \neq n} \sin([\bomega]_m(t_s))\e^{\jj(-[\bomega]_m(t_s) + [\bvarphi]_m(t_s))}\big|$, and $c(t_s)$ is a constant independent of $[\bomega]_n(t_s)$.
\end{lem}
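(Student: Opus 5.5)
Fix a snapshot $s$ and drop the argument $t_s$. The plan is to write the inner product $\bg^\Herm\bm$ as a sum of per-element phasors, isolate the contribution of element $n$, and expand the squared modulus. Using \eqref{eq: m and omega} and writing $[\bg]_n=|[\bg]_n|\e^{\jj[\bvarphi]_n}$, we get
\begin{equation*}
\bg^\Herm\bm=\sum_{m=1}^{N_\tx}|[\bg]_m|\sin([\bomega]_m)\e^{\jj([\bomega]_m-[\bvarphi]_m)}.
\end{equation*}
For $n$ fixed, split this as $a_n+b_n$, where $a_n=|[\bg]_n|\sin([\bomega]_n)\e^{\jj([\bomega]_n-[\bvarphi]_n)}$ and $b_n=\sum_{m\neq n}(\cdot)$. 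Then
\begin{equation*}
|\bg^\Herm\bm|^2=|b_n|^2+|a_n|^2+2\,\mathrm{Re}\{a_n b_n^*\}.
\end{equation*}
The term $|b_n|^2$ does not depend on $[\bomega]_n$, so it goes into the constant.

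First I show the self-term $|a_n|^2=|[\bg]_n|^2\sin^2([\bomega]_n)$ is negligible relative to the cross term when $N_\tx$ is large. The cross term scales like $|[\bg]_n|\,|b_n|$ with $|b_n|=\mathcal{O}(N_\tx)$ under coherent combining, whereas $|a_n|^2$ is $\mathcal{O}(1)$. So $|a_n|^2$ is dropped, or absorbed into $c$ as a lower-order term.

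For the cross term, write $b_n=|b_n|\e^{\jj\psi_n}$, so that
\begin{equation*}
2\,\mathrm{Re}\{a_nb_n^*\}=2|[\bg]_n||b_n|\sin([\bomega]_n)\cos([\bomega]_n-[\bvarphi]_n-\psi_n).
\end{equation*}
Next I use the phase alignment hypothesis $\arg(\bg^\Herm\bm)\approx0$. Since $b_n$ differs from the full sum by a single $\mathcal{O}(1)$ term, for large $N_\tx$ we have $\psi_n\approx\arg(\bg^\Herm\bm)\approx0$. This leaves $2r_n\sin([\bomega]_n)\cos([\bomega]_n-[\bvarphi]_n)$. Here I identify $r_n$ with $|b_n|$. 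Depending on the sign convention (conjugation of $\bg$), $|b_n|$ equals the expression in the statement up to conjugation inside the modulus, which leaves the modulus unchanged. The factor $|[\bg]_n|$ is either absorbed into $r_n$ or taken as normalized; I would state that convention explicitly, since the stated $r_n$ omits the amplitudes $|[\bg]_m|$. Note that in an LoS channel these amplitudes are nearly equal up to waveguide attenuation.

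Finally, to pass from a per-$n$ identity to a single sum over $n$, I would argue in the coordinate-wise (block-ascent) sense that the lemma is used for. For each $n$, the objective equals, up to terms constant in $[\bomega]_n$, the $n$-th summand. Summing these separable surrogates then gives an expression whose partial derivative in each $[\bomega]_n$ matches that of the true objective to leading order, and all remaining terms are collected in $c(t_s)$. I expect this to be the main obstacle: the constant absorbing $|b_n|^2$ depends on the other phases, so the decoupled sum is a surrogate rather than an exact identity. The precise claim should be that it agrees with $|\bg^\Herm\bm|^2$ to leading order in $N_\tx$, with $r_n$ treated as fixed from the current iterate. That is the sense of ``$\approx$'' I would adopt, with the approximation errors being the dropped $\mathcal{O}(1)$ self-term and the $\mathcal{O}(1/N_\tx)$ phase perturbation $\psi_n-\arg(\bg^\Herm\bm)$.
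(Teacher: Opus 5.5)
Your proposal is correct and follows the same route as the paper's proof. Like the paper, you expand $|\bg^\Herm\bm|^2$ around a single element into three parts: the self term, the cross term with the sum over the remaining elements, and a term that does not depend on that element. You then use the alignment hypothesis to set the phase of the remaining sum to zero and drop the $\mathcal{O}(1)$ self term for large $N_\tx$. Your extra remarks tighten the argument without changing it, and the paper's proof passes over all of them: alignment of the full sum implies alignment of the partial sum up to $\mathcal{O}(1/N_\tx)$, the amplitudes $|[\bg]_m|$ are implicitly normalized in the stated $r_n$, and summing over $n$ gives only a surrogate with frozen $r_n$, not an exact identity.
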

\begin{proof}
    Expanding $|\bg^\Herm(t_s)\bm(t_s)|^2$ around element $p$ while fixing all other elements $m \neq p$ yields, $|\bg^\Herm(t_s)\bm(t_s)|^2$
    \begin{align}
        \label{eq: SNR expansion}
         = &\sin([\bomega]_p)\e^{\jj([\bomega]_p-[\bvarphi]_p)}A(p) + \sin([\bomega]_p)\e^{-\jj([\bomega]_p-[\bvarphi]_p)}A^*(p) \nonumber \\
        &+ \sin^2([\bomega]_p) + |A(p)|^2,
    \end{align}
    where $A(p) = \sum_{m\neq p}\sin([\bomega]_m)\e^{\jj(-[\bomega]_m+[\bvarphi]_m)}$. Letting $A(p) = r_p \e^{\jj \arg(A(p))}$ and observing that $\arg(A(p)) \approx 0$ under coherent alignment, \eqref{eq: SNR expansion} simplifies to $\sin^2([\bomega]_p) + 2 r_p \sin([\bomega]_p)\cos([\bomega]_p - [\bvarphi]_p) + |A(p)|^2$. For large arrays ($N_\tx \gg 1$), the cross-term dominates the single-element self-radiation term $\sin^2([\bomega]_p)$, which completes the proof.
\end{proof}

\begin{algorithm}[t]
\caption{Dynamic-Aware Predictive \gls{LC}-\gls{DMA} Tracking}
\label{alg:AO_DMA}
\begin{algorithmic}[1]
\STATE \textbf{Input:} Current phase state $\bomega^0$, sampling duration $\Delta t$, horizon $S$, $\tau^+$, and $\tau^-=3\tau^+$.
\STATE Predict look-ahead array responses $\{\ba_\tx(t_s), \ba_\rx(t_s)\}_{s=1}^S$ and construct channel states $\bH(t_s)$.
\STATE Compute temporal weights $w_s(\tau),\,\forall s\!=\!1,\dots,S$ via \eqref{eq: weights definition}.
\STATE Compute the ideal instantaneous phases $[\bomega^{\mathrm{inst}}]_n(t_s) = \frac{1}{2}\arg([\bg]_n(t_s)) + \frac{\pi}{4}, \forall n, s$.
\STATE Compute the commanded phase vector $\bomega^\star$ in parallel across all $N_\tx$ elements via \eqref{eq: weighted phase update}.
\STATE Construct metasurface matrix $\bGamma$ using current continuous phase state and update digital precoder $\bq$ via \eqref{eq: q beamformer}.
\STATE Update physical \gls{LC} relaxation states for subsequent time steps according to \eqref{eq: time response}.
\end{algorithmic}
\end{algorithm}

Using Lemma~\ref{lem}, maximizing \eqref{eq: optimization 2 a} over the prediction horizon $S$ separates into $N_\tx$ parallel subproblems. For each element $n$, the optimization task is formulated as
\begin{equation}
    \label{eq: element subproblem}
    \underset{[\bomega]_n^\star}{\max} \sum_{s=1}^S \sin([\bomega]_n(t_s))\cos([\bomega]_n(t_s) - [\bvarphi]_n(t_s)),
\end{equation}
where $[\bomega]_n(t_s)$ is the continuous physical phase reached at time $t_s = s\Delta t$ when transitioning from the initial state $[\bomega^0]_n$ toward the commanded phase $[\bomega]_n^\star$ according to the relaxation trajectory in \eqref{eq: time response}.
For any instantaneous snapshot $s$, the unconstrained maximizer of $\sin(x)\cos(x - \phi_s)$ is given in closed form by $x_s^\star = \frac{\pi}{4} + \frac{\phi_s}{2}$.
Because the physical phase trajectory cannot switch instantaneously, the commanded phase target $[\bomega]_n^\star$ must steer the element toward the optimal future phase profile while respecting the finite response speed of the \gls{LC} molecules. Over a transition window driven primarily by electric-field excitation ($\{\tau^+,\tau^-\}$), the unit step response of the \gls{LC} phase at step $s$ exhibits an effective factor of $ 1 - \exp(-s\Delta t / \tau),\,\forall \tau\in\{\tau^+,\tau^-\}$. Consequently, applying a second-order Taylor tracking approximation around the per-step optima $\{x_{n,s}^\star\}_{s=1}^S$ yields the optimal commanded phase shift in closed form:
\begin{equation}
    \label{eq: weighted phase update}
    [\bomega^\star]_n \!\!= \!\!\!\!\!\!\!\mod\!\!\!\!\left(\arg \left( \sum_{s=1}^S w_s(\tau) \exp\left( \jj \left( \frac{[\bvarphi]_n(t_s)}{2} + \frac{\pi}{4} \right)\!\! \right)\!\! \right)\!,\pi\right),
\end{equation}
where the dynamic-aware temporal weights are given by
\begin{equation}
    \label{eq: weights definition}
    w_s(\tau) = \frac{1 - \exp\left(-\frac{s\Delta t}{\tau}\right)}{\sum_{j=1}^S \left(1 - \exp\left(-\frac{j\Delta t}{\tau}\right)\right)},\,\forall \tau\in\{\tau^+,\tau^-\}.
\end{equation}
The closed-form update in \eqref{eq: weighted phase update}--\eqref{eq: weights definition} optimizes all $N_\tx$ elements in parallel with $\mathcal{O}(1)$ complexity per unit cell.

\subsection{Tx Digital Precoder and Rx Combiner Optimization}
\label{sec: Tx Digital Precoder and Rx Combiner Optimization}
For a fixed phase-shift profile $\bomega$, Problem~\eqref{eq: optimization 1} reduces to
\begin{subequations}
\label{eq: optimization 4}
\begin{align}
    \quad&~\underset{\bq,\bw}{\max}~\sum_{s=1}^{S}\SNR(t_s)
    \label{eq: optimization 4 a}\\&~\text {s.t.} ~~ \|\bq\|_2^2\leq P_t, \quad \|\bw\|\leq 1,
    \label{eq: optimization 4 b}
\end{align}
\end{subequations}
which optimizes the baseband precoder $\bq$ and receive combiner $\bw$ under the transmit power budget. The optimal combiner and precoder that maximize the received summation over \gls{SNR} are given by the \gls{AO} over $\bq$ and $\bw$ \cite{tse2005fundamentals}:
\begin{align}
\label{eq: q beamformer}
\bq &= \sqrt{P_t} \be_{\max}\left(\sum_{s=1}^{S}\bh_q(t_s)\bh_q^\Herm(t_s)\right), \\ \bw &= \be_{\max}\left(\sum_{s=1}^{S}\bh_w(t_s)\bh_w^\Herm(t_s)\right),
\end{align}
where $\bh_q^\Herm(t_s)\defeq \bw^\Herm\bH(t_s)\bGamma(t_s)$, $\bh_w(t_s)\defeq \bH(t_s)\bGamma(t_s)\bq$ and $\be_{\max}(\cdot)$ denotes the unit-norm eigenvector associated with the maximum eigenvalue of the matrix.

\begin{figure}[!t]
    \centering \includegraphics[width=0.48\textwidth]{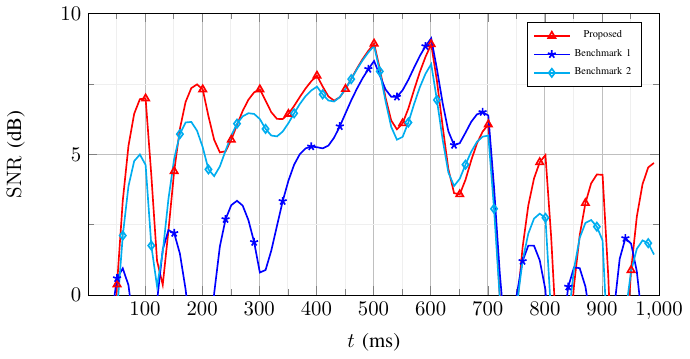}
    \caption{Achieved \gls{SNR} versus time for satellite azimuth angles $\phi_s \approx -24^\circ$ at a vehicle speed of $v = 15$\,m/s where $\tau^+=100$ ms and $\tau^-=3\tau^+$.}
    \label{fig:SNR-time}
\end{figure}

\subsection{Algorithm and Complexity Analysis}
The overall procedure is summarized in Algorithm~\ref{alg:AO_DMA}. In each reconfiguration period, the look-ahead channel steering vectors are predicted across the horizon $S$. Each element's target phase is then updated in parallel via the closed-form dynamic-aware estimator in \eqref{eq: weighted phase update}. Finally, the digital precoder is updated via \eqref{eq: q beamformer}. Evaluating \eqref{eq: weighted phase update} across all $N_\tx$ elements requires $\mathcal{O}(S N_\tx)$ operations, and computing $\bq$ requires $\mathcal{O}(N_\rx N_\tx + L)$. Thus, the proposed framework exhibits a strictly linear computational complexity of $\mathcal{O}\left(N_\tx \left(S + N_\rx\right)\right)$, making it well suited for real-time tracking in high-mobility satellite uplinks.

\begin{figure}[!t]
    \centering
    \includegraphics[width=0.48\textwidth]{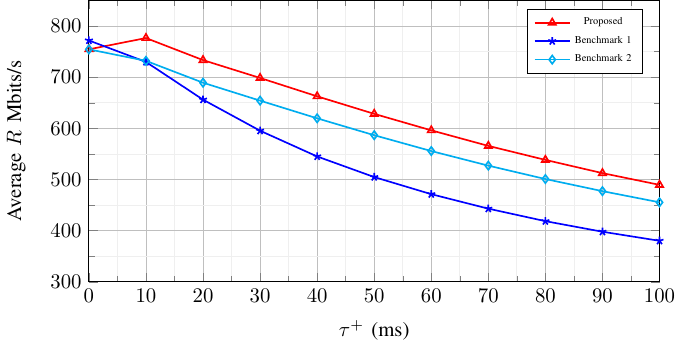}
    \caption{Average data rate versus reorientation time constants of the \gls{LC} materials when $\tau^-=3\tau^+$. $\phi_s \approx -24^\circ$ and $v = 15$\,m/s.}
    \label{fig:Data taup}
\end{figure}

\section{Performance Evaluation}
\label{sec: Performance Comparison}

\textbf{Simulation Setup:} We consider an Earth-centered coordinate system where the origin is at the center of the Earth and the Earth's radius is $R_E = 6371$\,km, as illustrated in Fig.~\ref{fig: system model}. The ground vehicle is initially located at $[R_E, 0, 0]^\Trans$ and travels at a speed of $v$\,m/s along a local circular path of radius $r_u = 5$\,m. We assume a \gls{LEO} satellite orbiting in the $x$-$y$ plane with a constant orbital speed of $7.5$\,km/s at an altitude of $h = 600$\,km. The trajectory of the satellite is described in Cartesian coordinates as $[(R_E + h)\cos(\phi_s), (R_E + h)\sin(\phi_s), 0]^\Trans$, where $\phi_s$ denotes the orbital azimuth angle. The satellite is equipped with a $20 \times 20$ element \gls{UPA} with inter-element spacing $d_\rx = \lambda/2$ operating at a carrier frequency of $f = 14$\,GHz, where $\lambda = c/f$ is the wavelength and $c$ is the speed of light in vacuum. The vehicle \gls{Tx} employs an \gls{LC}-based \gls{DMA} composed of $L = 10$ feed lines, each driving $N_e = 20$ radiating elements with an element spacing of $\lambda/6$. The guided-wave parameters are configured as $[\balpha_\ell]_{n_e} = 0.6$\,Np/m and $[\bbeta_\ell]_{n_e} = 1.4 \times (2\pi/\lambda)$ for all $\ell$ and $n_e$ \cite{Sun2026}. The \gls{Rx} noise power is calculated as $\sigma_n^2 = W N_0 N_f$, where $N_0 = -174$\,dBm/Hz is the \gls{AWGN} power spectral density, $W = 250$\,MHz is the channel bandwidth, and $N_f = 6$\,dB represents the \gls{Rx} noise figure. 

\begin{figure}[t]
    \centering    \includegraphics[width=0.48\textwidth]{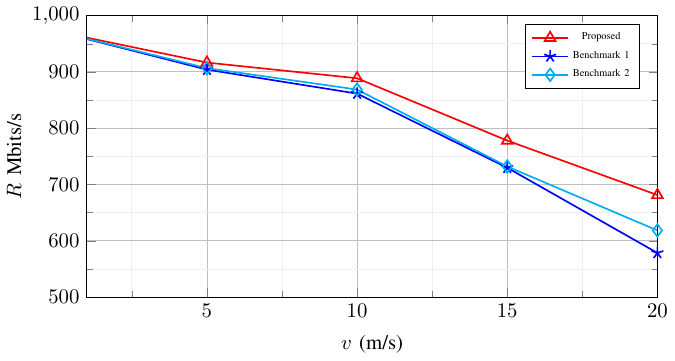}
    \caption{Achievable data rate versus vehicle speed for satellite azimuth angle $\phi_s \approx -24^\circ$.}
    \label{fig:vehicle}
\end{figure}

\textbf{Benchmark Schemes:} We compare the proposed method, which accounts for both channel and \gls{LC} dynamics, against two benchmarks. \textit{Benchmark 1 (neglecting channel dynamics):} Uses only the instantaneous channel state ($S=1$), without predicting mobility-induced channel evolution. \textit{Benchmark 2 (neglecting \gls{LC} dynamics):} Predicts the channel evolution over the horizon $S$, but designs the target phase shifts under the naive assumption of instantaneous \gls{LC} reconfiguration (i.e., setting uniform weights $w_s = 1/S$ and ignoring switching dynamics by considering $\tau^+ = \tau^- = 0$ during optimization). During evaluation, however, the phase transitions are executed subject to the realistic physical \gls{LC} time response in \eqref{eq: time response}.

\textbf{Simulation Results:} Fig.~\ref{fig:SNR-time} illustrates the achieved \gls{SNR} over time for satellite azimuth angle $\phi_s = -24^\circ$. 
The dynamic phase reconfiguration is necessary to track the channel variations. While the Benchmark~1 scheme focuses solely on maximizing the instantaneous \gls{SNR} without accounting for the \gls{LC} switching delay, and Benchmark~2 assumes \gls{LC} phase shifts can change instantly, the proposed transition-aware algorithm optimizes phase transitions that maximize the cumulative received \gls{SNR} over the tracking horizon by considering the time response constraint of the \gls{LC}. Consequently, the benchmark methods fail to attain the desired \gls{SNR} within the channel coherence window of the moving vehicle, whereas the proposed method maintains stable connectivity. Fig. \ref{fig:Data taup} illustrates the average achieved data rate when the orientation time constants of the \gls{LC} material change. As $\tau^+$ increases (with $\tau^- = 3\tau^+$ \cite{Xu2024}), the performance decreases as the tracking becomes more challenging for \gls{LC}.

Figs.~\ref{fig:vehicle} and \ref{fig:satellite} evaluate the achievable data rate for the proposed transition-aware and the benchmark approaches. Fig.~\ref{fig:vehicle} shows that the performance gain of the proposed scheme becomes more pronounced as the vehicle velocity increases, as it penalizes slow phase transitions. Furthermore, Fig.~\ref{fig:satellite} illustrates the achievable rate across various satellite positions. As the satellite moves toward lower orbital elevation/azimuth angles relative to the vehicle, the required spatial phase gradients change more rapidly, thereby magnifying the necessity and performance advantage of the proposed algorithm.
\begin{figure}[t]
    \centering
    \includegraphics[width=0.48\textwidth]{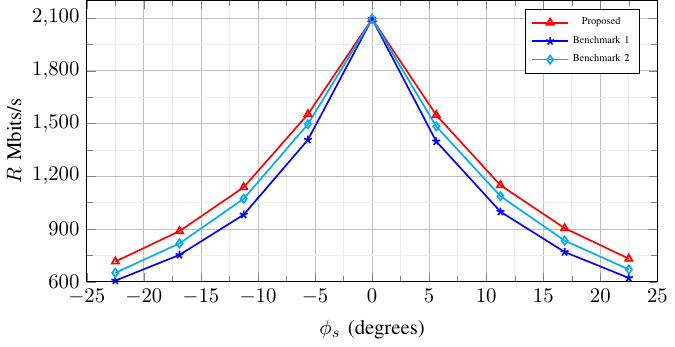}
    \caption{Achievable data rate versus satellite position at a vehicle speed of $v = 20$\,m/s.}
    \label{fig:satellite}
\end{figure}
    
\section{Conclusion}
\label{sec: conclusion}
In this paper, we addressed beam tracking for high-mobility vehicle-to-\gls{LEO} satellite uplinks using \gls{LC}-based \glspl{DMA}. We proposed a dynamic-aware predictive beamforming framework that jointly accounts for mobility-induced channel dynamics and \gls{LC} hardware dynamics. The proposed solution combines predictive beam tracking with dynamic-aware \gls{LC} phase reconfiguration and achieves linear computational complexity. Simulation results demonstrate more stable \gls{SNR} and higher achievable rates than schemes neglecting either channel or \gls{LC} dynamics, particularly in high-mobility scenarios.

\section*{Acknowledgment}
The authors would like to thank Prof. Anja Klein for her valuable comments on this work.

\bibliographystyle{IEEEtran}
\bibliography{References}

\end{document}